\begin{document}
\newtheorem{theo}{Theorem}
\newtheorem{defi}{Definition}
\newtheorem{coro}{Corollary}
\newtheorem{prop}{Proposition}[section]
\newtheorem{rem}{Remark}
\newtheorem{lemma}{Lemma}
\newtheorem{proper}{Property}[section]
\newtheorem{algo}{Algorithm}

\newcommand{\tk}{P^{n} (x, dy)}
\newcommand{\bm}{\boldmath}
\newcommand{\um}{\unboldmath}
\newcommand{\botheta}{\mbox{$\theta$}}
\newcommand{\boTheta}{\mbox{$\Theta$}}
\newcommand{\G}{\mbox{${\cal{G}}$}}
\newcommand{\M}{\mbox{${\cal{M}}$}}
\newcommand{\rcorr}{\mbox{corr}}
\newcommand{\rcov}{\mbox{cov}}
\newcommand{\RQMC}{\rm RQMC}
\newcommand{\rv}{\mbox{Var}}
\newcommand{\V}{\mbox{Var}}
\newcommand{\bolam}{\mbox{$\lambda$}}
\newcommand{\bopi}{\mbox{$p$}}
\newcommand{\ru}{{\mbox{Uniform}}}
\newcommand{\cas}{\mbox{${\cal{S}}$}}
\newcommand{\caf}{\mbox{${\cal{F}}$}}
\newcommand{\cag}{\mbox{${\cal{G}}$}}
\newcommand{\cac}{\mbox{${\cal{C}}$}}
\newcommand{\cau}{\mbox{${\cal{U}}$}}
\newcommand{\cae}{\mbox{${\cal{E}}$}}
\newcommand{\caz}{\mbox{${\cal{Z}}$}}
\newcommand{\cay}{\mbox{${\cal{Y}}$}}

\newcommand{\beq}{\begin{equation}}
\newcommand{\eeq}{\end{equation}}
\newcommand{\beqn}{\begin{eqnarray}}
\newcommand{\eeqn}{\end{eqnarray}}
\newcommand{\one}{{\mathbf 1}}
\newcommand{\RR}{{\mathbf R}}

\newcommand{\LL}{\mbox{${\cal{L}}$}}
\newcommand{\xe}{\mbox{E}}
\newcommand{\bu}{\mathbf{u}}
\newcommand{\bU}{\mathbf{U}}
\newcommand{\bW}{\mathbf{W}}
\newcommand{\bv}{\mathbf{v}}
\newcommand{\bw}{\mathbf{w}}
\newcommand{\bi}{\mathbf{i}}

\newcounter{example}[section]
\def\theexample{\thesection.\arabic{example}}
\setcounter{example}{0}

\renewenvironment{proof} {\noindent {\bf Proof } \it \bigskip}

\title{Interacting Multiple Try Algorithms with Different Proposal Distributions}

\author{Roberto Casarin\setcounter{footnote}{1}\footnotemark{} \setcounter{footnote}{4}\footnotemark{}\hspace{15pt} %
Radu V. Craiu\setcounter{footnote}{2}\footnotemark{}\hspace{15pt} %
Fabrizio Leisen\setcounter{footnote}{3}\footnotemark{}\hspace{15pt} \\\\%
{\centering \small\setcounter{footnote}{4}\footnotemark{}$\,$ Advanced School of Economics, Venice}\\%
{\centering \small\setcounter{footnote}{1}\footnotemark{} University of Brescia}\\%
{\centering \small\setcounter{footnote}{3}\footnotemark{} Universidad Carlos III de Madrid}\\%
{\centering \small\setcounter{footnote}{2}\footnotemark{} University of Toronto}%
}

\maketitle

\begin{abstract}
We propose a new class of interacting Markov chain Monte Carlo (MCMC) algorithms designed for increasing the efficiency of a modified  multiple-try Metropolis (MTM) algorithm. The extension with respect to the existing MCMC literature is twofold. The sampler proposed extends the basic MTM algorithm by allowing different proposal distributions in the multiple-try generation step. We exploit the structure of the MTM algorithm with different proposal distributions to naturally introduce an interacting MTM mechanism (IMTM) that expands the class of  population Monte Carlo methods. We show the validity of the algorithm and discuss the choice of the selection weights and of the different proposals. We provide numerical studies which show that the new algorithm can perform better than the basic MTM algorithm and that the interaction mechanism allows the IMTM to efficiently explore the state space.
\end{abstract}

\section{Introduction}

Markov chain Monte Carlo (MCMC) algorithms are now essential for the analysis of complex statistical models. In the MCMC universe, one of the most widely used class of algorithms is defined by the Metropolis-Hastings (MH)   \citep{metrop,Hast:70:MCS} and its variants.  An important generalization of the standard MH formulation is represented by the multiple-try Metropolis (MTM)   \citep{Liu:2000cr}.  While in the MH formulation  one accepts or rejects a single proposed move, the MTM is designed so that the next state of the chain is selected among  multiple  proposals. The multiple-proposal setup can be used effectively to explore the sample space of the target distribution and subsequent developments made use of this added flexibility.  For instance,  \cite{Craiu:2007rr} propose to use antithetic and quasi-Monte Carlo samples to generate the proposals and to improve the efficiency of the algorithm while \cite{Pandolfi:2010a} and \cite{Pandolfi:2010b} apply the multiple-proposal idea to a trans-dimensional setup and combine Reversible Jump MCMC with MTM.

This work further generalizes  the MTM algorithm presented in \cite{Liu:2000cr} in two directions.   First, we show that the original MTM transition kernel  can be modified to allow for different proposal  distributions in the multiple-try generation step while preserving  the ergodicity of the chain. The use of different proposal distributions gives more freedom in designing MTM algorithms for target distributions that require different proposals across the sample space.  An important challenge remains the choice of the  distributions used to generate the proposals and we propose to address it by expanding upon methods used   within the population Monte Carlo class of algorithms.
 
The class of population Monte Carlo procedures \citep{pop-mcmc,DelMoral:Miclo:00,DelMoral:04,pop-static} has been designed to address the inefficiency of classical MCMC samplers in complex applications involving multimodal and high dimensional target distributions \citep{Pritchard:2000fk,hhs}.
 Its formulation  relies on a number of MCMC processes that are run in parallel while learning from one another about the geography of the target distribution. 

A second contribution of the paper is  finding reliable generic methods for constructing the proposal distributions for the MTM algorithm.We propose an interacting MCMC sampling design for the MTM that preserves the Markovian property. More specifically, in the proposed interacting MTM (IMTM) algorithm, we allow the distinct proposal distributions to use information produced by a population of auxiliary chains. We infer that the resulting performance of the MTM is tightly connected to the performance of the chains' population. In order to maximize the latter, we propose a number of strategies that can be used to tune  the auxiliary chains.
We also adapt previous extensions of the MTM and link the use of  stochastic overrelaxation,  random-ray Monte Carlo method \citep[see][]{Liu:2000cr} and simulated annealing to IMTM.

In the next section we discuss the IMTM algorithm, propose a number of alternative implementations and prove their ergodicity. In Section 3 we focus on some special cases of the IMTM algorithm and in Section 4 the performance of the methods proposed is demonstrated with simulations and real examples. We end the paper with a discussion of future directions for research.

\section{Interacting Monte Carlo Chains for MTM}

 We begin by describing the MTM and its extension for using different proposal distributions.
 
\subsection{Multiple-Try Metropolis With Different Proposal Distributions}

Suppose that of interest is sampling from a distribution $\pi$ that has support in
$\mathcal{Y}\subset\RR^d$ and is known up
to a normalizing constant.   Assuming that the current state of the chain is $x$, the update defined by the  MTM algorithm of \cite{Liu:2000cr}  is described in Algorithm 
\ref{alg1}.

\begin{center}
\begin{minipage}[thp]{330pt} \par\hrule\vspace{5pt}
\begin{algo}{Multiple-try Metropolis Algorithm (MTM)}
\par\vspace{5pt}\hrule
\begin{enumerate}
\item  Draw $M$ trial proposals $y_1,\ldots,y_M$ from the proposal
distribution $T(\cdot|x)$. Compute $w(y_j,x)$ for each $j\in\{1,\ldots,M\}$, where $w(y,x)=\pi(y) T(x|y)\lambda(y,x),$
and $\lambda(y,x)$ is a symmetric function of $x,y$.
\item Select $y$ among the $M$ proposals with probability
proportional to $w(y_j,x), j=1,\ldots,M$.
 \item Draw $x_1^*,\ldots,x_{M-1}^*$ variates
from the distribution $T(\cdot|y)$ and let $x_M^*=x$.
\item Accept $y$ with
generalized acceptance probability
$$\rho=\min \left \{1,\frac{w(y_1,x)+\ldots+w(y_M,x)}{w(x_1^*,y)+\ldots+w(x_M^*,y)}
\right \}.$$
\end{enumerate}
\label{alg1}
\end{algo}
\hrule\vspace{5pt}
\end{minipage}
\end{center}

Note that while the MTM uses the same distribution to generate all the proposals, it is possible to extend
this formulation to different proposal distributions without altering the ergodicity of the associated Markov chain.  

Let $T_{j}(x, \cdot)$, with $j=1,\ldots,M$, be a set of proposal distributions
for which  $T_{j}(x,y)>0$ if and only if $T_{j}(y,x)>0$.
Define \[w_{j}(x,y)=\pi(x)T_{j}(x,y)\lambda_{j}(x,y)\qquad j=1,\ldots,M\]
where $\lambda_{j}(x,y)$ is a nonnegative symmetric function in
$x$ and $y$ that can be chosen by the user. The only requirement is that
$\lambda_{j}(x,y)>0$ whenever $T(x,y)>0$. Then the MTM algorithm with different
proposal distributions is given in Algorithm \ref{alg2}.

\begin{center}
\begin{minipage}[thp]{340pt} \par\hrule\vspace{5pt}
\begin{algo}{MTM with Different Proposal Distributions}\label{alg2}
\par\vspace{5pt}\hrule
\begin{enumerate}
\item Draw independently $M$ proposals $y_{1},\ldots,y_{M}$ such that $y_{j}\sim T_{j}(x,\cdot)$.
Compute $w_{j}(y_{j},x)$ for $j=1,\ldots,M$.
\item Select $Y=y$ among the trial set $\{y_{1},\ldots,y_{M}\}$ with probability
proportional to $w_{j}(y_{j},x)$, $j=1,\ldots,M$. Let $J$ be the index
of the selected proposal. Then draw $x_{j}^{*}\sim T_{j}(y,\cdot)$,
$j\neq J$, $j=1,\ldots,M$ and let $x_{J}^{*}=x$.
\item Accept y with probability \[
\rho=\min\left\{ 1,\frac{w_{1}(y_{1},x)+\cdots+w_{M}(y_{M},x)}{w_{1}(x_{1}^{*},y)+\cdots+w_{M}(x_{M}^{*},y)}\right\} \]
 and reject with probability $1-\rho$.
\end{enumerate}
\end{algo}
\hrule\vspace{5pt}
\end{minipage}
\end{center}

It should be noted that Algorithm \ref{alg2} is a special case of the interacting MTM presented in the next section and that the proof of ergodicity for the associated chain follows closely the proof given in Appendix A for the interacting MTM and is not given here.

In Section 4 we will show, through simulation experiments, that this algorithm is more efficient then  a MTM algorithm with a single proposal distribution.



\subsection{General Construction}

Undoubtedly, Algorithm \ref{alg2} offers additional flexibility in organizing the MTM sampler. This section introduces generic methods for using a population of MCMC chains to define the proposal distributions.

\begin{center}
\begin{minipage}[t]{330pt} \par\hrule\vspace{5pt}
\begin{algo}{Interacting Multiple Try Algorithm (IMTM)}
\par\vspace{5pt}\hrule
\begin{itemize}
\item For $i=1,\ldots,N$
\begin{enumerate}
\item Let $x=x_{n}^{(i)}$, for $j=1,\ldots,M_i$  draw $y_{j}\sim T_{j}^{(i)}(\cdot|x_{n}^{(1:i-1)},x,x_n^{(i+1:N)})$ independently and compute
$$w_j^{(i)}(y_j,x)=\pi(y_j)T_j^{(i)}(y_j|x_{n}^{(1:i-1)},x, x_n^{(i+1:N)})\lambda_j^{(i)}(y_j,x).$$
\item Select $J\in\{1,\ldots,M_i\}$ with probability proportional to $w_{j}^{(i)}(y_{j},x)$, $j=1,\ldots,M_i$ and set $y=y_{J}$.
\item For $j=1,\ldots,M_i$ and $j\neq J$ draw $x_{j}^{*}\sim T_{j}^{(i)}(\cdot|x_{n}^{(1:i-1)},y, x_n^{(i+1:N)})$, let $x_{J}^{*}=x_{n}^{(i)}$
and compute $$w_j^{(i)}(x_j^*,y)=\pi(x_j^*)T_j^{(i)}(x_j^*|x_{n}^{(1:i-1)},y, x_n^{(i+1:N)})\lambda_j^{(i)}(x_j^*,y).$$
\item Set $x_{n+1}^{(i)}=y$ with probability \[
\rho_{i}=\min\left\{ 1,\frac{w_{1}^{(i)}(y_{1},x)+\ldots+w_{M_{i}}^{(i)}(y_{M_i},x)}{w_{1}^{(i)}(x_{1}^{*},y)+\ldots+w_{M_i}^{(i)}(x_{M_i}^{*},y)}\right\} \]
 and $x_{n+1}^{(i)}=x_{n}^{(i)}$ with probability $1-\rho_{i}$.
\end{enumerate}
\end{itemize}\label{alg3}
\end{algo}
\hrule\vspace{5pt}
\end{minipage}
\end{center}

Consider a population of  $N$ chains, $X^{(i)}=\{x_n^{(i)}\}_{n\in\mathbb{N}}$ and $i=1,\ldots,N$. For full generality we assume that  the $i$th chain has  MTM transition kernel  with $M_i$ different proposals $\{T_j^{(i)}\}_{1\le j \le M_i}$  (if we set $M_i=1$ we imply that the chain has a MH transition kernel).
The interacting mechanism allows each proposal distribution to possibly depend on the values of the chains  at the previous step. Formally, if $\Xi_{n}=\{x_n^{(i)}\}_{i=1}^{N}$ is the vector of values taken at iteration $n\in\mathbb{N}$ by the population of chains, then we allow each proposal distribution used in updating the population at iteration $n+1$ to depend on $\Xi_n$. The mathematical formalization is used in the description of Algorithm \ref{alg3}. One expects that the chains in the population are spread throughout the sample space and thus the proposals generated
are a good representation of the sample  space $\cay$ ultimately resulting in better mixing for the chain of interest.


In order to give a representation of the IMTM transition density let us introduce the following notation. Let
$T_{j}^{(i)}(x,y)=T_j^{(i)}(y|x_{n}^{(1:i-1)},x, x_n^{(i+1:N)})$,
$T^{(i)}(x,y_{1:M_i})=\prod_{k=1}^{M_i}T_{k}^{(i)}(x,y_{k})$ and
$T_{-j}^{(i)}(x,y_{1:M_i})=\prod_{k\neq j}^{M_i}T_{k}^{(i)}(x,y_{k})$ and define $dy_{1:M_i}=\prod_{k=1}^{M_i}dy_{k}$ and $dy_{-j}=\prod_{k\neq j}^{M_i}dy_{k}$.

The transition density associated to the population of chains is then
\begin{equation}
K(\Xi_{n},\Xi_{n+1})=\prod_{i=1}^{N}K_{i}(x_{n}^{(i)},x_{n+1}^{(i)})
\end{equation}
where
\begin{equation}
K_{i}(x,y)=\sum_{j=1}^{M_i}A_{j}^{(i)}(x,y)T_{j}^{(i)}(x,y)+\left(1-\sum_{j=1}^{M_i}B_{j}^{(i)}(x)\right)\delta_{x}(y)
\end{equation}
is the transition kernel associated to the $i$-th chain of algorithm with
$$
A_{j}^{(i)}(x,y)=\int_{\mathcal{Y}^{2(M_{i}-1)}}\tilde{w}_{j}^{(i)}(y,x)\rho_{j}^{(i)}(x,y)T_{-j}^{(i)}(y,x^{*}_{1:M_{i}})T_{-j}^{(i)}(x,y_{1:M_{i}})dx^{*}_{-j}dy_{-j}
$$
and
$$
B^{(i)}_{j}(x)=\int_{\mathcal{Y}^{2(M_{i}-1)+1}}\rho_{j}^{(i)}(x,y)T_{-j}^{(i)}(y,dx^{*}_{1:M_{i}})T^{(i)}(x,dy_{1:M_{i}})dx^{*}_{-j}dy_{1:M_{i}}.
$$
In the above equations $\tilde{w}_{j}^{(i)}(y_{j},x)=w_{j}^{(i)}(y_{j},x)/(w_{j}^{(i)}(y,x)+\bar{w}_{-k}^{(i)}(y_{1:M_i}|x))$, with $j=1,\ldots,M_{i}$ and
$\bar{w}_{-j}^{(i)}(y_{1:M_{i}}|x)=$ $\sum_{k\neq j}^{M_{i}}w_{k}^{(i)}(y_{k},x)$, are the normalized weights used in the selection step of the IMTM algorithm and $$\rho_{j}^{(i)}(x,y)=\min\left\{1,\frac{w_{j}^{(i)}(y,x)+\bar{w}_{-j}^{(i)}(y_{1:M_i}|x)}{w_{j}^{(i)}(x,y)+\bar{w}_{-j}^{(i)}(x_{1:M_i}^{*}|y)}\right\}$$
is the generalized MH ratio associated to a MTM algorithm.

The validity of Algorithm \ref{alg3} relies upon the detailed balance condition. 
\begin{theo}
The transition density $K_{i}(x_{n}^{(i)},x_{n+1}^{(i)})$ associated to the $i$-th chain of the IMTM algorithm satisfies the conditional detailed balanced condition.
\end{theo}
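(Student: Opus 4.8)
The plan is to reduce the statement to ordinary detailed balance for a single multiple-try chain and then to run the classical argument. The word \emph{conditional} is the key to this reduction: throughout the update of the $i$-th chain the states $x_n^{(1:i-1)}$ and $x_n^{(i+1:N)}$ of the other chains are frozen, so each $T_j^{(i)}(x,\cdot)$ is just an ordinary proposal kernel and each $w_j^{(i)}$ an ordinary weight function in which those states enter only as fixed parameters. Hence it suffices to establish $\pi(x)K_i(x,y)=\pi(y)K_i(y,x)$ for the conditional kernel $K_i$, with the remaining chains held fixed; this is exactly detailed balance for the MTM-with-different-proposals update of Algorithm \ref{alg2} targeting $\pi$.

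I would first dispose of the rejection part. The term $\bigl(1-\sum_j B_j^{(i)}(x)\bigr)\delta_x(y)$ is supported on $\{x=y\}$ and is trivially symmetric there, so detailed balance for $K_i$ reduces to $\pi(x)\sum_j A_j^{(i)}(x,y)T_j^{(i)}(x,y)=\pi(y)\sum_j A_j^{(i)}(y,x)T_j^{(i)}(y,x)$, and I would prove the stronger termwise identity $\pi(x)A_j^{(i)}(x,y)T_j^{(i)}(x,y)=\pi(y)A_j^{(i)}(y,x)T_j^{(i)}(y,x)$ for each fixed $j$.

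The heart of the argument is an algebraic collapse followed by a change of variables. Writing $S_y=\sum_k w_k^{(i)}(y_k,x)$ and $S_{x^*}=\sum_k w_k^{(i)}(x_k^*,y)$ (with $y_j=y$ and $x_j^*=x$), the selection weight times the acceptance ratio simplifies to $\tilde w_j^{(i)}(y,x)\rho_j^{(i)}(x,y)=w_j^{(i)}(y,x)/\max\{S_y,S_{x^*}\}$, so the normalizing sum $S_y$ cancels. I would then expand $\pi(x)A_j^{(i)}(x,y)T_j^{(i)}(x,y)$ as a single integral over the auxiliary variables $(y_{-j},x^*_{-j})$, recombining $T_j^{(i)}(x,y)T_{-j}^{(i)}(x,y_{1:M_i})$ into the full proposal product from $x$. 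On this integrand I would apply the involution that swaps the forward proposals with the backward reference points, $x_k^*\leftrightarrow y_k$ for $k\neq j$. Because $y_k$ and $x_k^*$ are generated by the \emph{same} component kernel $T_k^{(i)}$ (only the conditioning point differs), this map is measure preserving and sends the integrand for $x\to y$ onto the integrand for $y\to x$; under it $S_y$ and $S_{x^*}$ interchange, leaving $\max\{S_y,S_{x^*}\}$ invariant, and the surviving prefactor reduces to the pointwise identity $\pi(x)T_j^{(i)}(x,y)\,w_j^{(i)}(y,x)=\pi(y)T_j^{(i)}(y,x)\,w_j^{(i)}(x,y)$. This last relation holds because $w_j^{(i)}$ couples $\pi$ with the backward proposal density (as in Algorithm \ref{alg1}) and $\lambda_j^{(i)}$ is symmetric, so both sides equal $\pi(x)\pi(y)\,T_j^{(i)}(x,y)T_j^{(i)}(y,x)\lambda_j^{(i)}(x,y)$.

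I expect the main obstacle to be bookkeeping rather than anything conceptual: writing the forward and reverse moves as a single joint integrand, matching the per-try kernels across the swap so that using a \emph{different} proposal at each try does not spoil the involution, and checking the pairing $x_j^*=x$ with $y_j=y$ is consistent at the selected index $j$. Once the involution is seen to preserve both the try-index $j$ and each component kernel $T_k^{(i)}$, the different-proposal and interacting features introduce no genuine difficulty, and the termwise identity—hence conditional detailed balance for $K_i$—follows.
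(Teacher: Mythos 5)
Your proposal is correct and follows essentially the same route as the paper's Appendix A: reduce to the conditional kernel with the other chains frozen, collapse the selection probability and the acceptance ratio into $w_j^{(i)}(y,x)\min\{1/S_y,1/S_{x^*}\}$, pull out the symmetric prefactor $w_j^{(i)}(x,y)w_j^{(i)}(y,x)/\lambda_j^{(i)}(x,y)$, and observe that the remaining integral is invariant under exchanging $x$ with $y$. The only difference is presentational: you make explicit the measure-preserving swap $y_k\leftrightarrow x_k^*$ that the paper leaves implicit in its closing remark that the expression ``is symmetric in $x$ and $y$.''
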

\begin{proof}
See Appendix A.
\end{proof}

Since the transition $K_{i}(x_{n}^{(i)},x_{n+1}^{(i)})$, $i=1,\ldots,N$ has $\pi(x)$ as stationary distribution and satisfies the conditional detailed balance condition then the joint transition $K(\Xi_{n},\Xi_{n+1})$
has $\pi(x)^{N}$ as a stationary distribution.

An important issue directly connected to the practical implementation of the IMTM is the choice of 
proposal distributions and  the choice of  $\lambda_{j}^{(i)}(x,y)$. First it should be noted that at each iteration of the interacting chains the computational complexity of the algorithm is $\mathcal{O}(N\sum_{i=1}^NM_i)$.
When considering the number of chains and the number of proposals, there are two possible  strategies in designing the interaction mechanism.

The first strategy is to use a small number of chains, say  $2\le N\le 5$, 
in order to improve the mixing of each chain and to allow for large jumps between different
regions of the state space. When applying this strategy to our IMTM
algorithms it is possible to set the number of proposals to be equal to the number of chains, i.e. $M_i=N$, for all $1\le i\le N$. In this way all
the chains can interact at each iteration of the algorithm and many search directions can be included among the proposals.

A second  strategy is to use a higher number of chains, e.g. $N=100$, in order to possibly have, at each iteration, a good
approximation of the target or  a much higher number of search directions for a good exploration of the sample space. This algorithm design
strategy is common in Population Monte Carlo or Interacting MCMC methods. Clearly when a high number of chains is used within IMTM, it is necessary to set $M_i<N$.
In the next section we discuss a few strategies to built the $M_i$ proposals.

\subsection{Parsing the Population of Auxiliary Chains}

One of the strategies that revealed to be successful in our applications consists in the random selection of a certain number of chains of the population
in order to build the proposals. More specifically, we let $M_{i}=M$, for all $i$, and when updating the $i$-th chain of the population we sample $I_{1},\ldots,I_{M}$ random
indexes from the uniform distribution $\mathcal{U}_{\{1,\ldots,N\}}$, with $N>M$, and then set the proposals: $T_{j}^{(i)}(y,x)=T_{j}^{(i)}(\cdot|x,x_{n}^{(I_{1})},\ldots, x_{n}^{(I_{M})})$, for all $j=1,\ldots,M$. On the basis of our simulation experiments we found that the following choice $T_{j}^{(i)}(\cdots|x_{n}^{(I_{1})},\ldots, x_{n}^{(I_{M})})=T_{j}(\cdot|x_{n}^{(I_{j})})$ is works well in improving the mixing of the chains.

Previously suggested forms for the function $\lambda_{j}^{(i)}(x,y)$ \citep{Liu:2000cr} are:
\begin{enumerate}
\item [a)] $\lambda_{j}^{(i)}(x,y)=1$
\item [b)] $\lambda_{j}^{(i)}(x,y)=2\{T_{j}^{(i)}(x,y)+T_{j}^{(i)}(y,x)\}^{-1}$
\item [c)] $\lambda_{j}^{(i)}(x,y)=\{T_{j}^{(i)}(x,y)T_{j}^{(i)}(y,x)\}^{-\alpha}$, $\alpha>0$.
\end{enumerate}
Here we propose to include in the choice of $\lambda$ the information provided by the population of chains. Therefore  we suggest to modify the above functions as follows
\begin{enumerate}
\item [a$^\prime$)] $\lambda_{j}^{(i)}(x,y)=\nu_{j}$
\item [b$^\prime$)] $\lambda_{j}^{(i)}(x,y)=2\nu_{j}\,\{T_{j}^{(i)}(x,y)+T_{j}^{(i)}(y,x)\}^{-1}$
\item [c$^\prime$)] $\lambda_{j}^{(i)}(x,y)=\nu_{j}\,\{T_{j}^{(i)}(x,y)T_{j}^{(i)}(y,x)\}^{-\alpha}$, $\alpha>0$
\end{enumerate}
where the factor $\nu_{j}$ captures the behaviour of the auxiliary chains at the previous iteration
$$
\nu_{j}=\frac{1}{N}\sum_{i=1}^{N}\mathbb{I}_{\{j\}}(J_{n-1}^{(i)}),\quad j=1,\ldots,M
$$
where $J_{n-1}^{(i)}$ is the random index of the selection step at the iteration $n-1$ for the $i$-th chain.  The modifications proposed for $\lambda(\cdot,\cdot)$ would increase the use of those proposal distributions favoured by the population of chains at previous iteration. Since  $\nu_j$ depends only on samples generated at the previous step by the population of chains, the ergodicity of the  IMTM chain is preserved. An alternative strategy is to sample the random indexes $I_{1},\ldots,I_{M}$ with probabilities proportional to $\nu_{j}$.

\subsection{Annealed IMTM}

Our belief in IMTM's improved performance is underpinned by the assumption that the population of Monte Carlo chains is spread throughout the sample space. This can be partly achieved by initializing the chains using draws from a distribution overdispersed with respect to $\pi$ \citep[see also][]{jenn,gelmrubin}  and partly by modifying the stationary distribution for some of the chains in the population.
Specifically, we consider the sequence of annealed distributions $\pi_t=\pi^{t}$ with $t \in \{\xi_1 , \xi_2,\ldots, \xi_N\}$, where $1= \xi_1 > \xi_2> \ldots>\xi_n$, for instance $\xi_t=1/t$. When $t,s$ are close temperatures, $\pi_t$ is similar to $\pi_s$, but $\pi=\pi_{1}$ may be much harder to sample from than $\pi_{\xi_N}$ as has been long recognized in the simulated annealing and simulated tempering literature \citep[see][]{mapa,gey-temp,neal-temp}. Therefore, it is likely that some of the chains designed to sample from $\pi_1,\ldots,\pi_N$ have good mixing properties, making them good candidates for the population of MCMC samplers needed for the IMTM.

We  thus consider  the Monte Carlo population made of the $N-1$ chains having $\{\pi_2,\ldots, \pi_N\}$ as stationary distributions. An example of annealed interacting MTM is given in Algorithm \ref{alg5bis}. Note that we let the $i$-th chain to interact only with the $N-i+1$ chains at  higher temperature by sampling $I_{1},\ldots,I_{M}$ from $\mathcal{U}_{\{1,\ldots,N-i+1\}}$. 

An astute reader may have noticed that the use of MTM for {\it each} auxiliary chain may be redundant since for smaller $\xi_i$'s the distribution $\pi_i$ is easy to sample from. In Algorithm \ref{alg5} we present an alternative 
implementation of the annealed IMTM in which each auxiliary chain is  MH with target $\pi_i$, $1\le i \le N-1$.

\begin{center}
\begin{minipage}[thp]{330pt} \par\hrule\vspace{5pt}
\begin{algo}{Annealed IMTM Algorithm (AIMTM1)}
\par\vspace{5pt}\hrule
\begin{itemize}
\item For $i=1,\ldots,N$
\begin{enumerate}
\item Let $x=x_{n}^{(i)}$ and sample $I_{1},\ldots,I_{M}$ from $\mathcal{U}_{\{1,\ldots,N-i+1\}}$.
\item For $j=1,\ldots,M$  draw $y_{j}\sim T_{j}^{(i)}(\cdot|x_{n}^{(I_{j})})$ independently and compute
$$w_j^{(i)}(y_{j},x)=\pi(y_j)T_{j}^{(i)}(y_j|x_{n}^{(I_{j})})\lambda_{j}^{(i)}(y_{j},x).$$
\item Select $J\in\{1,\ldots,M\}$ with probability proportional to $w_{j}^{(i)}(y_{j},x)$, $j=1,\ldots,M$ and set $y=y_{J}$.
\item For $j=1,\ldots,M$ and $j\neq J$ draw $x_{j}^{*}\sim T_{j}^{(i)}(\cdot|x_{n}^{(1:i-1)},y, x_n^{(i+1:N)})$, let $x_{J}^{*}=x_{n}^{(i)}$
and compute $$w_j^{(i)}(x_{j}^{*},y)=\pi(x_{j}^{*})T_j^{(i)}(x_{j}^{*}|y)\lambda_j^{(i)}(x_{j}^{*},y).$$
\item Set $x_{n+1}^{(i)}=y$ with probability $\rho_{i}$, where $\rho_{i}$ is the generalized M.H. ratio of the IMT algorithm
 and $x_{n+1}^{(i)}=x_{n}^{(i)}$ with probability $1-\rho_{i}$.
\end{enumerate}
\end{itemize}
\label{alg5bis}
\end{algo}
\hrule\vspace{5pt}
\end{minipage}
\end{center}

\begin{center}
\begin{minipage}[thp]{330pt} \par\hrule\vspace{5pt}
\begin{algo}{Annealed IMTM Algorithm (AIMTM2)}
\par\vspace{5pt}\hrule
\begin{itemize}
\item For $i=1$
\begin{enumerate}
\item Let $x=x_{n}^{(i)}$ and sample $I_{1},\ldots,I_{M}$ from $\mathcal{U}_{\{1,\ldots,N\}}$.
\item For $j=1,\ldots,M$  draw $y_{j}\sim T_{j}^{(i)}(\cdot|x_{n}^{(I_{j})})$ independently and compute
$$w_j^{(i)}(y_{j},x)=\pi(y_j)T_{j}^{(i)}(y_j|x_{n}^{(I_{j})})\lambda_{j}^{(i)}(y_{j},x).$$
\item Select $J\in\{1,\ldots,M\}$ with probability proportional to $w_{j}^{(i)}(y_{j},x)$, $j=1,\ldots,M$ and set $y=y_{J}$.
\item For $j=1,\ldots,M$ and $j\neq J$ draw $x_{j}^{*}\sim T_{j}^{(i)}(\cdot|x_{n}^{(1:i-1)},y, x_n^{(i+1:N)})$, let $x_{J}^{*}=x_{n}^{(i)}$
and compute $$w_j^{(i)}(x_{j}^{*},y)=\pi(x_{j}^{*})T_j^{(i)}(x_{j}^{*}|y)\lambda_j^{(i)}(x_{j}^{*},y).$$
\item Set $x_{n+1}^{(i)}=y$ with probability $\rho_{i}$, where $\rho_{i}$ is the generalized M.H. ratio of the IMT algorithm
 and $x_{n+1}^{(i)}=x_{n}^{(i)}$ with probability $1-\rho_{i}$.
\end{enumerate}
\item For $i=2,\ldots,N$
\begin{enumerate}
\item Let $x=x_{n}^{(i)}$ and update the proposal function $T^{(i)}(\cdot|x)$.
\item Draw $y\sim T^{(i)}(\cdot|x)$ and compute
$$\rho_{i}=\min\left\{1,\frac{\pi(y)^{\xi_{i}}T^{(i)}(x|y)}{\pi(x)^{\xi_{i}}T^{(i)}(y|x)}\right\}.$$
\item Set $x_{n+1}^{(i)}=y$ with probability $\rho_{i}$ and $x_{n+1}^{(i)}=x_{n}^{(i)}$ with probability $1-\rho_{i}$.
\end{enumerate}
\end{itemize}\label{alg5}
\end{algo}
\hrule\vspace{5pt}
\end{minipage}
\end{center}

The chain of interest, corresponding to $\xi=1$, has an MTM transition kernel with $M=N-1$ proposal distributions. At time $n$ the $j$th proposal distribution used for the chain ergodic to $\pi$  is 
$T_{j}^{(1)}=T^{(j)}$, is the same as the proposal used by the $j$th auxiliary chain, for all $2\le j \le N$.
%



An additional gain could be obtained if the auxiliary chains' transition kernels are
modified using adaptive MCMC strategies \citep[see also][for another example of adaption for interacting chains]{Chauveau2002}. However, it should be noted that letting the auxiliary chains adapt indefinitely results in complex theoretical justifications for the IMTM which go beyond the scope of this paper and will be presented elsewhere.
Our current recommendation is to use finite adaptation for  the auxiliary chains prior to the start of the IMTM. One could take advantage of multi-processor computing units and use parallel programming  to increase the computational efficiency of this approach.

The adaptation of  $\lambda^{(i)}_{j}(\cdot,\cdot)$, through the weights $\nu_j$ defined in the previous section, should be used cautiously in this case. The aim of the annealing procedure is to allow the higher temperatures chains to explore widely the sample space and to improve the mixing of the MTM chain. Using $\nu_j$ the context of annealed IMTM  could arbitrarily penalize some of the higher temperature proposals and reduce the effectiveness of the annealing strategy.

It is possible to have a Monte Carlo approximation of a quantity of interest by using the output produced by all the chains in the population.
For example let $$\mathcal{I}=\int_{\mathcal{Y}} h(x) \pi(x)dx$$ be the quantity of interest where $h$ is a test function. It is possible to approximate this quantity as follows
$$\mathcal{I}_{NT}=\frac{1}{T}\sum_{n=1}^{T}\frac{1}{\bar{\zeta}}\sum_{j=1}^{N}h(x_{n}^{(j)})\zeta_{j}(x_{n}^{(j)})$$
where $x_{n}^{(i)}$, with $n=1,\ldots,T$ and $i=1,\ldots,N$ is the output of a IMTM chains with target $\pi^{\xi_{i}}$ and $\zeta_{j}(x)=\pi(x)/\pi^{\xi_{j}}(x)$ is a set of importance weights with normalizing constant $\bar{\zeta}=\sum_{j=1}^{N}\zeta_{j}(x_{n}^{(j)})$.
\subsection{Gibbs within IMTM update}

It should be noticed that in the proposed algorithm at the $n$-th iteration the $N$ chains are updated
simultaneously. In the interacting MCMC literature a sequential updating scheme (Gibbs-like updating)
has been proposed for example in \cite{Mengersen:Robert:03} and \cite{Campillo:Rakotozafy:Rossi:09}.
In the following we show that the Gibbs-like updating also apply to our IMTM context. In the
Gibbs-like interacting MTM (GIMTM) algorithm given in Algorithm \ref{alg4} the different proposals functions of
the $i$-th chain, with $i=1,\ldots,N$, may depend on the current values of the updated chains $x_{n+1}^{(j)}$, with $j=1,\ldots,(i-1)$
and on the last values $x_{n}^{(j)}$, with $j=(i+1),\ldots,N$, of the chains which have not yet been updated.

\begin{center}
\begin{minipage}[thp]{340pt} \par\hrule\vspace{5pt}
\begin{algo}{Gibbs-like IMTM Algorithm (GIMTM)}\label{alg4}
\par\vspace{5pt}\hrule
\begin{itemize}
\item For $i=1,\ldots,N$
\begin{enumerate}
\item For $j=1,\ldots,M_{i}$  draw $y_{j}\sim T_{j}^{(i)}(\cdot|x_{n+1}^{(1:i-1)},x_n^{(i)},x_n^{(i+1:N)})$ independently and compute
$$w_j^{(i)}(y_j,x)=\pi(y_j)T_j^{(i)}(y_j|x_{n+1}^{(1:i-1)},x_n^{(i)}, x_n^{(i+1:N)})\lambda_j^{(i)}(y_j,x_n^{(i)}).$$
\item Select $J\in\{1,\ldots,M_{i}\}$ with probability proportional to $w_{j}^{(i)}(y_{j},x)$, $j=1,\ldots,M_{i}$ and set $y=y_{J}$.
\item For $j=1,\ldots,M_{i}$ and $j\neq J$ draw
$x_{j}^{*}\sim T_{j}^{(i)}(\cdot|x_{n+1}^{(1:i-1)},y, x_n^{(i+1:N)})$, let $x_{J}^{*}=x_{n}^{(i)}$ and compute
$$w_j^{(i)}(x_j^*,y)=\pi(x_j^*)T_j^{(i)}(x_j^*|x_{n+1}^{(1:i-1)},y, x_n^{(i+1:N)})\lambda_j^{(i)}(x_j^*,y).$$
\item Set $x_{n+1}^{(i)}=y$ with probability \[
\rho_{i}=\min\left\{ 1,\frac{w_{1}^{(i)}(y_{1},x)+\ldots+w_{M_{i}}^{(i)}(y_{M_{i}},x)}{w_{1}^{(i)}(x_{1}^{*},y)+\ldots+w_{M_{i}}^{(i)}(x_{M_{i}}^{*},y)}\right\} \]
 and $x_{n+1}^{(i)}=x_{n}^{(i)}$ with probability $1-\rho_{i}$.
\end{enumerate}
\end{itemize}
\end{algo}
\hrule\vspace{5pt}
\end{minipage}
\end{center}

In the GIMTM algorithm the iteration mechanism between the chains is not the same as in the IMTM algorithm. The chains are no longer independent since the proposals  may depend on the current values for some of  the  chains in the population. The transition kernel for the whole population is
$$K(\Xi_{n},\Xi_{n+1})=\prod_{i=1}^{N}K_{i}(x_{n}^{(i)},x_{n+1}^{(i)}|x_{n+1}^{(1:i-1)},x_{n}^{(i+1:N)})$$
and in this case the validity of the algorithm still relies upon the conditional detail balance condition given for the IMTM algorithm.

Finally we remark that the GIMTM algorithm allows us to introduce further possible choices for the $\lambda_{j}^{(i)}(x,y)$ functions.
In particular a repulsive factor (see \cite{Mengersen:Robert:03}) can be introduced in the selection weights in order to induce
negative dependence between the chains. We let the study of the GIMTM algorithm and the use of repulsive factors for future research and
focus instead on the properties of the IMTM algorithm.
%

\section{Some generalizations}
In the following we will discuss some possible generalization of the IMTM algorithm. First we show how to use the stochastic overrelaxation method to possibly have a further gain in the efficiency. Secondly we suggest two possible strategies to built the different proposal functions of the IMTM. The first strategy consists in proposing values along different search directions and represents an extension of the random-ray Monte Carlo algorithm presented in \cite{Liu:2000cr}. The second strategy relies upon a suitable combination of target tempering and adaptive MCMC chains.

\subsection{Stochastic Overrelaxation}

Stochastic overrelaxation (SOR) is a Markov chain Monte Carlo technique
developed by Adler (1981) for normal densities and subsequently extended
by Green and Han (1992) for non-normal targets. The idea behind this
approach is to induce negative correlation between consecutive draws
of a single MCMC process.

Within the MTM algorithm we can implement SOR by inducing negative correlation between the
proposals and between the proposals and the current state of the chain, $x$.
A natural and easy to implement procedure may be based on the assumption that $(y_1,\ldots,y_{M-1},x)^T \sim N_{d\times M}(\mathbf{0},
V)$ where $V$'s structure is dictated  by the desired negative dependence between the
proposals $y_1,\ldots,y_n$'s and  $x$, specifically

\[
V=\left (
\begin{array}{cccc}
\Sigma_1 & \Psi_{12} & \ldots &\Psi_{1M}\\
\Psi_{12} & \Sigma_2 & \ldots & \Psi_{2M}\\
\ldots & \ldots & \ldots & \ldots \\
\Psi_{1M}& \Psi_{2M} & \ldots & \Sigma_M\\
\end{array}
\right ) .
\]

For instance, we can set $\Psi_{ij}=0$ whenever $i,j\ne M$ and
$\Psi_{iM}=\Sigma_i^{1/2}R_{iM} \Sigma_M^{1/2}$ where
$R_{iM}$ is a correlation matrix which corresponds to extreme negative
correlation \citep[see][for a discussion of extreme dependence]{crameng2}
between any two components (with same index) of
$y_i$ and $x$, for any $1\le i\le M-1$. The general construction falls within the context of dependent proposals as discussed
by \cite{Craiu:2007rr} with the additional bonus of
"pulling"  the proposals away  from the current state  due to the imposed negative correlation.
This essentially ensures that no proposals are exceedingly close to the
current location of the chain.
Also note that the construction is general enough and  can be applied
for Algorithms 2 and 3 as long as the proposal distributions are Gaussian.

\subsection{Multiple Random-ray Monte Carlo}
We show here that the use of different proposals for the MTM algorithm allows also to extend the random-ray Monte Carlo method given in \cite{Liu:2000cr}.
In particular the proposed algorithm allows to deal with multiple search directions at each iteration of the chains. At the $n$-th iteration of the chain, in order
to update the set of chains $\Xi_{n}$, the algorithm performs for each chain $x_{n}^{(r)}\in\Xi_{n}$,
with $r=1,\ldots,N$, the following steps:
\begin{enumerate}
\item Evaluate the gradient $\log\pi(x)$ at $x_{n}^{(r)}$ and find the mode
$a_{n}$ along $x_{n}^{(r)}+ru_{n}$ where $u_{n}=x_{n}^{(r)}-x_{n-1}^{(r)}$.

\item Sample $I_{1},\ldots,I_{M}$ from the uniform $\mathcal{U}_{\{1,\ldots,r-1,r+1,\ldots,N\}}$.

\item Let $e_{n,j}=(a_{n}-x_{n}^{(I_{j})})/||a_{n}-x_{n}^{(I_{j})}||$ and sample $r_{j}$ from
$\mathcal{N}(0,\sigma^{2})$.
\end{enumerate}
and then use the set of proposals $T_{j}$ which depends on $e_{n,j}$ to perform a MTM transition with different proposals as in the IMTM algorithm (see Alg. \ref{alg1}).

\section{Simulation Results}
\subsection{Single-chain results}
In this section we carry out, through some examples, a comparison between the single-chain multiple try algorithms MTM-DP with different proposals and the
algorithm in Liu et al. (2000). In the MTM-DP algorithm we consider four Gaussian random-walk proposals $\mathbf{y}_{j}\sim\mathcal{N}_{n}(\mathbf{x},\Lambda_{j})$ with
$\Lambda_{1}=0.1Id_{n}$, $\Lambda_{2}=5Id_{n}$, $\Lambda_{3}=50Id_{n}$ and $\Lambda_{4}=100Id_{n}$, where $Id_{n}$ denotes the $n$-order identity matrix. In the
MTM selection weights we set $\lambda_{j}(\mathbf{x},\mathbf{y})=2\alpha_{j}/(T_{j}(\mathbf{x},\mathbf{y})+T_{j}(\mathbf{y},\mathbf{x}))$,
where $\alpha_{j}=0.25$, for $j=1,\ldots,4$.

\bigskip

In order to compare the MTM algorithm with different proposals and the Multiple Try algorithm of Liu et al. (2000) we consider 20,000
iterations and use four trials $\mathbf{y}_{j}$ generated by the following proposal
\begin{equation}\label{Eq_ProposalIMT}
T_{j}(\mathbf{x},\mathbf{y}_{j})=\sum_{j=1}^{4}\alpha_{j}\mathcal{N}_{n}(\mathbf{x},\Lambda_{j})
\end{equation}
where $\Lambda_{j}$, $j=1,\ldots,4$, have been defined above. In the weighs of the selection step we
set $\lambda(\mathbf{x},\mathbf{y})=2/(T_{j}(\mathbf{x},\mathbf{y})+T_{j}(\mathbf{y},\mathbf{x}))$.

\subsubsection{Bivariate Mixture with two components}
We consider the following bivariate mixture of two normals
\begin{equation}
\frac{1}{3}\mathcal{N}_{2}(\boldsymbol{\mu}_{1},\Sigma_{1})+\frac{2}{3}\mathcal{N}_{2}(\boldsymbol{\mu}_{2},\Sigma_{2})
\end{equation}
with $\boldsymbol{\mu}_{1}=(0,0)'$, $\boldsymbol{\mu}_{2}=(10,10)'$, $\Sigma_{1}=\hbox{diag}(0.1,0.5)$ and $\Sigma_{2}=\hbox{diag}(0.5,0.1)$.

In Fig. \ref{Fig_Acf} the ACF with 30 lags for the first component of the bivariate MH chain. The autocorrelation is lower for the MTM algorithm with different proposals. 

\begin{figure}[p]
\centering
\includegraphics[width=270pt]{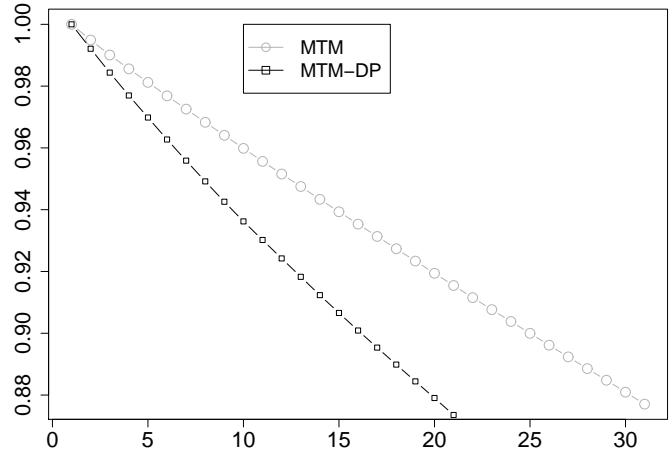}
\caption{Autocorrelation function of the Liu's MTM (gray line) and MTM with different proposals (black line).} \label{Fig_Acf}
\end{figure}
\begin{figure}[p]
\centering
\includegraphics[width=270pt]{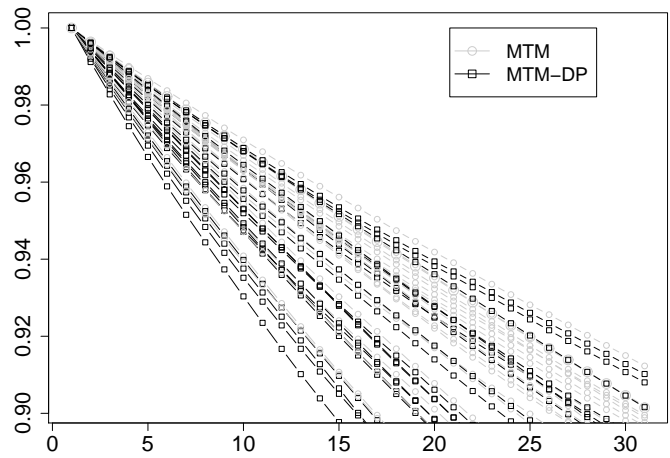}
\caption{Autocorrelation function (ACF) of the 20 components of the multivariate MH chain for the Liu's MTM (gray lines) and MTM with different proposals (black lines).} \label{Fig_AcfMultiv}
\end{figure}

\subsubsection{Multivariate Normal Mixture}
We compare the algorithms for high-dimensional targets. We consider the following multivariate mixture of two normals with a sparse variance-covariance
structure
\begin{equation}
\frac{1}{3}\mathcal{N}_{20}(\boldsymbol{\mu}_{1},\Sigma_{1})+\frac{2}{3}\mathcal{N}_{20}(\boldsymbol{\mu}_{2},\Sigma_{2})
\end{equation}
with $\boldsymbol{\mu}_{1}=(3,\ldots,3)'$, $\boldsymbol{\mu}_{2}=(10,\ldots,10)'$ and $\Sigma_{j}$, with $j=1,2$, generated independently from
a Wishart distribution $\Sigma_{j}\sim\mathcal{W}_{20}(\nu,Id_{20})$ where $\nu$ is the degrees of freedom parameter of the Wishart. In the experiments
we set $\nu=21$.

The autocorrelation function of the chain for one of the experiment is given in Fig. \ref{Fig_AcfMultiv}. The ACF has been evaluated for each components of the $20$-dimensional chain. The values of the ACF of the MTM-DP (black lines in Fig. \ref{Fig_AcfMultiv}) are less than those of the original MTM (gray lines of the same figure) in all the directions of the support space. We conclude that the MTM algorithm with different proposals (Algorithm \ref{alg3})  outperforms the \cite{Liu:2000cr} MTM algorithm.

\subsection{Multiple-chains results}
In this section we show real and simulated data results of the general interacting multiple try algorithm in Alg. \ref{alg1}.
\subsubsection{Bivariate Mixture with two components}
The target distribution is the following bivariate mixture of two normals
\begin{equation}
\frac{1}{3}\mathcal{N}_{2}(\boldsymbol{\mu}_{1},\Sigma_{1})+\frac{2}{3}\mathcal{N}_{2}(\boldsymbol{\mu}_{2},\Sigma_{2})
\end{equation}
with $\boldsymbol{\mu}_{1}=(0,0)'$, $\boldsymbol{\mu}_{2}=(10,10)'$, $\Sigma_{1}=\hbox{diag}(0.1,0.5)$ and $\Sigma_{2}=\hbox{diag}(0.5,0.1)$.

We consider a population of $N=50$ chains with $M=50$ proposals and 1,000 iterations of the IMTM algorithm. For each chain we consider the case $T_{j}^{(i)}(\mathbf{y}|\mathbf{x}_{n+1}^{(1:i-1)},\mathbf{x}_{n}^{(i)},\mathbf{x}_{n}^{(i+1:N)})=T^{(j)}(\mathbf{y},\mathbf{x}_{n}^{(j)})$ and draw
\begin{equation}
\mathbf{y}_{j}\sim\mathcal{N}_{2}(\mathbf{x}_{n}^{(j)},\Lambda_{j})
\end{equation}
where $\Lambda_{j}=(0.1+5j)Id_{2}$.
In this specification of the IMTM algorithm each chain has $M$ independent proposals with conditional mean given by the previous values of the chains in the population.

In this experiment we consider two kind of weights. First we set $\lambda_{j}^{(i)}(x,y)=(T_{j}^{(i)}(x,y)T_{j}^{(i)}(y,x))^{-1}$, that corresponds to use importance sampling selection weights, secondly we consider $\lambda_{j}^{(i)}(x,y)=2(T_{j}^{(i)}(x,y)+T_{j}^{(i)}(y,x))^{-1}$, which implies a symmetric MTM algorithm. We denote with IMTM-IS and IMTM-TA respectively the resulting algorithms.

Fig. \ref{Fig_SurfIMT} show the results of the IMTM-IS and IMTM-TA algorithms at the last iteration of the population of chains (black dots). In both of the algorithms the population is visiting the two modes of the distribution in the right proportion. Moreover each chain is able to jump from one mode to the other. (the light-gray line represents the sample path of one of the chain).

\begin{figure}[p]
\centering
\includegraphics[width=320pt]{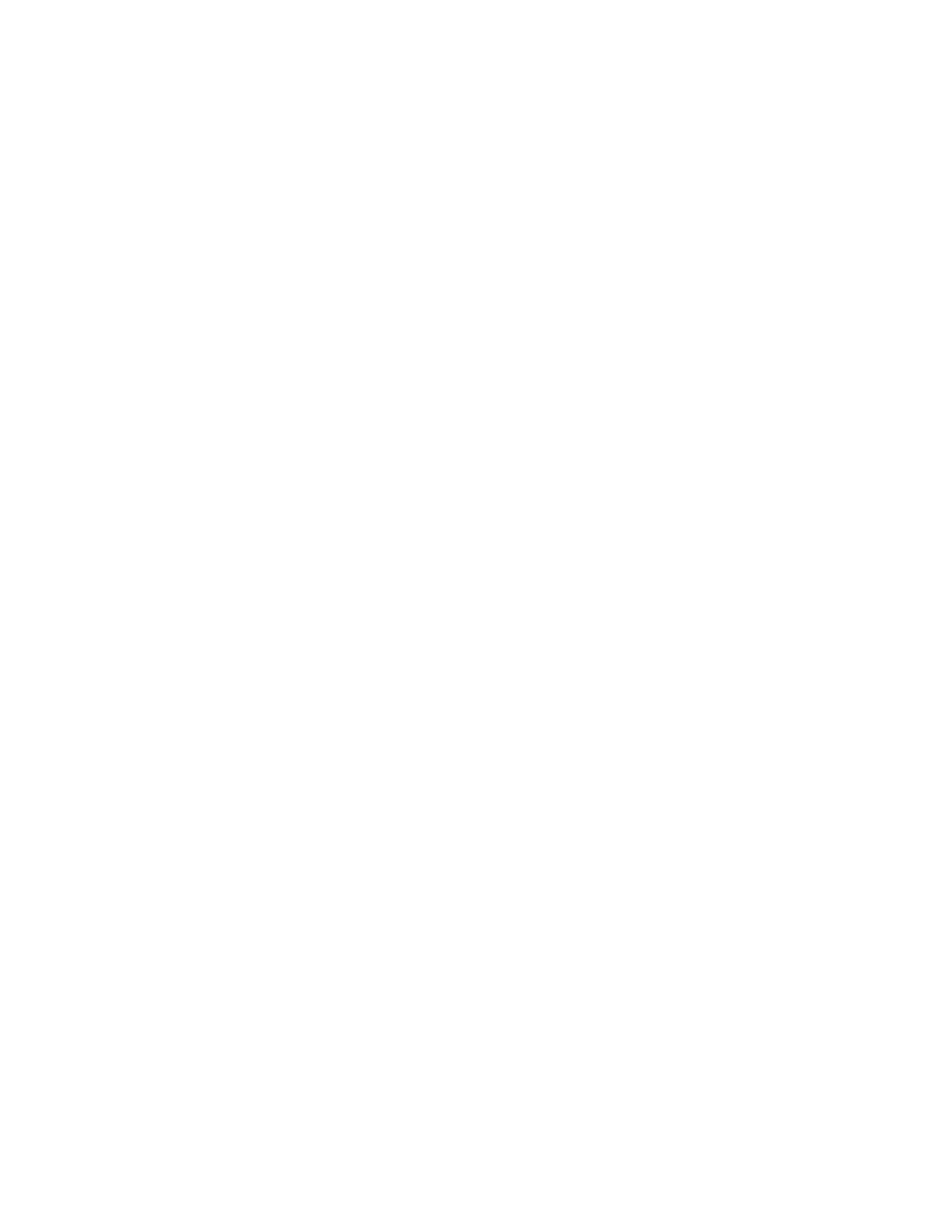}
\includegraphics[width=320pt]{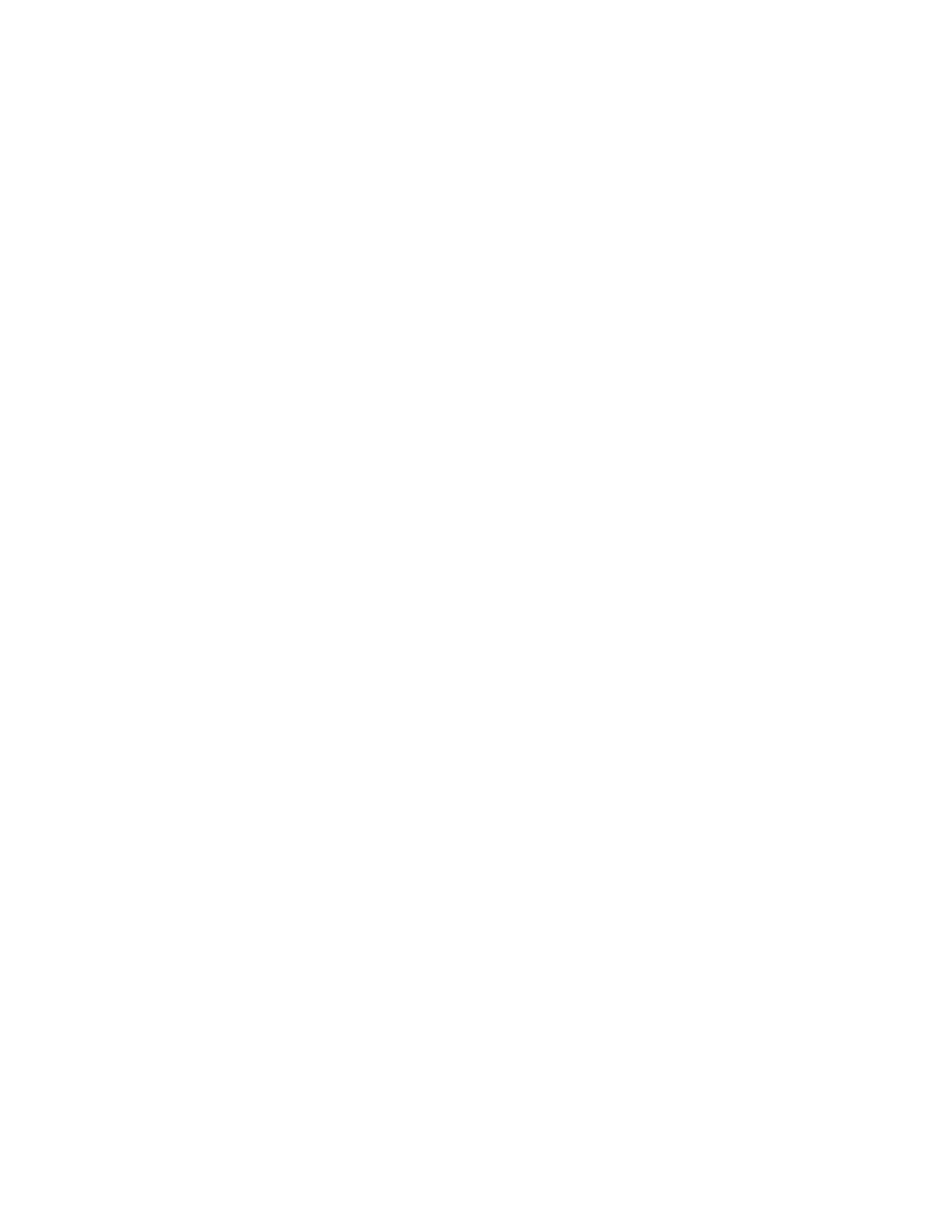}
\caption{Log-target level sets (solid lines), the MH chains (red dots) at the last iteration of the IMTM algorithms and the path of one of the chain of the set (gray line). Up: output of the IMTM-IS with importance sampling selection weights. Bottom: ouput of the IMTM-TA with symmetric lambda as in (c).} \label{Fig_SurfIMT}
\end{figure}

\subsubsection{Beta-Binomial Model}
We consider here the problem of the genetic instability of esophageal cancers. During a neoplastic progression
the cancer cells undergo a number of genetic changes and possibly lose entire chromosome sections. The loss of a
chromosome section containing one allele by abnormal cells is called \textit{Loss of Heterozygosity} (LOH). The LOH can
be detected using laboratory assays on patients with two different alleles for a particular gene. Chromosome regions
containing genes which regulate cell behavior, are hypothesized to have a high rates of LOH. Consequently the loss
of these chromosome sections disables important cellular controls.

Chromosome regions with high rates of LOH are hypothesized to contain \textit{Tumor Suppressor Genes} (TSGs), whose
deactivation contributes to the development of esophageal cancer. Moreover the neoplastic progression is thought to
produce a high level of background LOH in all chromosome regions.

In order to discriminate between "background" and TSGs LOH, the Seattle Barrett's Esophagus research project
(\cite{barret06}) has collected LOH rates from esophageal cancers for 40 regions, each on a distinct chromosome
arm. The labeling of the two groups is unknown so \cite{desai} suggest to consider a mixture model for the frequency
of LOH in both the "background" and TSG groups.

We consider the hierarchical Beta-Binomial mixture model proposed in \cite{Warnes:01}
\begin{eqnarray}
&&f(x,n|\eta,\pi_{1},\pi_{2},\gamma)=\eta{n\choose x}\pi_{1}^{x}(1-\pi_{1})^{n-x}+\\
&&\quad(1-\eta){n\choose x}\frac{\Gamma(1/\omega_{2})}{\Gamma(\pi_{2}/\omega_{2})\Gamma((1-\pi_{2})/\omega_{2})}
\frac{\Gamma(x+\pi_{2}/\omega_{2})\Gamma(n-x+(1-\pi_{2})/\omega_{2})}{\Gamma(n+1/\omega_{2})}\nonumber
\end{eqnarray}
with $x$ number of LOH sections, $n$ the number of examined sections, $\omega_{2}=\exp\{\gamma\}/(2(1+\exp\{\gamma\}))$.
Let $\mathbf{x}=(x_{1},\ldots,x_{m})$ and $\mathbf{n}=(n_{1},\ldots,n_{m})$ be a set of observations from $f(x,n|\eta,\pi_{1},\pi_{2},\gamma)$
and let us assume the following priors
\begin{eqnarray}
\eta\sim\mathcal{U}_{[0,1]},\quad \pi_{1}\sim\mathcal{U}_{[0,1]},\quad \pi_{2}\sim\mathcal{U}_{[0,1]}\quad\hbox{and}\quad\gamma\sim\mathcal{U}_{[-30,30]}
\end{eqnarray}
with $\mathcal{U}$ the uniform distribution on $[a,b]$. Then the posterior distribution is
\begin{equation}
\pi(\eta,\pi_{1},\pi_{2},\gamma|\mathbf{x},\mathbf{n})\propto\prod_{j=1}^{m}f(x_{j},n_{j}|\eta,\pi_{1},\pi_{2},\gamma)
\end{equation}
The parametric space is of dimension four: $(\eta,\pi_{1},\pi_{2},\gamma)\in[0,1]^{3}\times[-30,30]$ and the posterior distribution has two well-separated  modes making it difficult to sample using generic methods.

\begin{figure}[t]
\centering
\includegraphics[width=300pt]{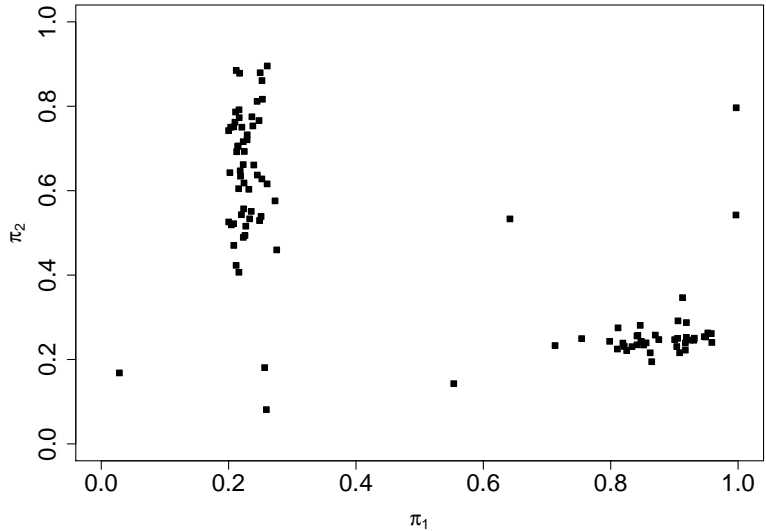}\label{Fig_Barret}
\caption{Values of the population of chains (dots) at the last iteration on the subspace ($\pi_{1}$,$\pi_{2}$). The interaction is given
by $M=4$ proposal functions randomly selected between the population of $N=100$ chains.}
\end{figure}

We apply the IMTM algorithm with $100$ iterations, $M=4$ proposal functions randomly selected between a population of $N=100$ chains. For each chain we consider importance sampling weights in the selection step, i.e. we set $\lambda_{j}^{(i)}(x,y)=(T_{j}^{(i)}(x,y)T_{j}^{(i)}(y,x))^{-1}$ with $j=1,\ldots,4$ and $i=1,\ldots,100$. The values of the population of chains (dots) at the last iteration on the subspace ($\pi_{1}$,$\pi_{2}$) is given in Fig. \ref{Fig_Barret}. The IMTM is able to visit both  regions of the parameter space confirming the analysis of \cite{craiu-jeff-yang} and  \cite{Warnes:01}.

\subsubsection{Stochastic Volatility}
The estimation of the stochastic volatility (SV) model due to \cite{Taylor:94} still represents a challenging issue in both off-line (\cite{Celeux:Marin:Robert:06}) and sequential (\cite{Casarin:Marin:09}) inference contexts. One of the main difficulties is due to the  high dimension of the sampling space which hinders the use of the data-augmentation and prevents a reliable joint estimation of the parameters and the latent variables. As highlighted in \cite{Casarin:Marin:Robert:09} using multiple chains with a chain interaction mechanism could lead to a substantial improvement in the MCMC method for this kind of model. We consider the SV model given in \cite{Celeux:Marin:Robert:06}
\begin{eqnarray*}
y_t|h_t                          & \sim & \mathcal{N}\left(0,e^{h_t}\right) \\
h_t|h_{t-1},\boldsymbol{\theta}  & \sim & \mathcal{N}\left(\alpha+\phi h_{t-1},\sigma^{2}\right) \\
h_0|\boldsymbol{\theta}          & \sim & \mathcal{N}\left(0,\sigma^{2}/(1-\phi^{2})\right)
\end{eqnarray*}
with $t=1,\ldots,T$ and $\boldsymbol{\theta}=$ $(\alpha$, $\phi$, $\sigma^{2})$. For the parameters we assume the noninformative prior \citep[see][]{Celeux:Marin:Robert:06} $$\pi(\boldsymbol{\theta})\propto 1/(\sigma\beta)\mathbb{I}_{(-1,1)}(\phi)$$ where $\beta^{2}=\exp(\alpha)$.
In order to simulate from the posterior we consider the full conditional distributions and apply a Gibbs algorithm. If we  define  $\mathbf{y}=(y_{1},\ldots,y_{T})$ and $\mathbf{h}=(h_{0},\ldots,h_{T})$ then the full conditionals for $\beta$ and $\phi$ are the inverse gamma distributions
\begin{eqnarray*}
\beta^2|\mathbf{h},\mathbf{y}\quad & \sim    & \mathcal{I}\mathcal{G}\left(\sum_{t=1}^{T}y^{2}_t\exp(-h_t)/2,(T-1)/2\right)\, \\
\sigma^2|\phi,\mathbf{h},\mathbf{y}  & \sim    & \mathcal{I}\mathcal{G}\left(\sum_{t=2}^{T}(h_t-\phi h_{t-1})^{2}/2+h^{2}_{1}(1-\phi^{2}),(T-1)/2\right)
\end{eqnarray*}
 and $\phi$ and the latent variables have non-standard full conditionals
\begin{eqnarray*}
&&\!\!\!\!\!\!\!\!\!\pi(\phi|\sigma^{2},\mathbf{h},\mathbf{y})\propto (1-\phi^{2})^{1/2}\exp\left(-\frac{\phi^{2}}{2\sigma^{2}}\sum_{t=2}^{T-1}h^{2}_t-\frac{\phi}{\sigma^{2}}\sum_{t=2}^{T}h_t h_{t-1}\right)\mathbb{I}_{(-1,+1)}(\phi)\, \\
&&\!\!\!\!\!\!\!\!\!\pi(h_t|\alpha,\phi,\sigma^{2},\mathbf{h},\mathbf{y})  \propto \exp\left(-\frac{1}{2\sigma^{2}}\left((h_t-\alpha-\phi h_{t-1})^{2}-\right.\right. \\
&&\quad\quad\quad\quad\quad\quad\quad\quad\quad\quad\left.\left.(h_{t+1}-\alpha-\phi h_t)^{2}\right)-\frac{1}{2}\left(h_t+y^{2}_t\exp(-h_t)\right)\right)  .
\end{eqnarray*}
 In order to sample from the posterior we
use an IMTM within Gibbs algorithm. A detailed description of the
proposal distributions for $\phi$ and $h_t$ can be found in
\cite{Celeux:Marin:Robert:06}.

We consider the two parameter settings $(\alpha$,\,$\phi$,$\,\sigma^{2})=$ $(0,0.99,0.01)$ and $(\alpha$,$\,\phi$,$\,\sigma^{2})=$ $(0,0.9,0.1)$ which correspond, in a financial stock market context, to daily and weekly frequency data respectively. Note that as reported in \cite{Casarin:Marin:09} inference in the daily example is more difficult. We compare the performance of  MH within Gibbs and IMTM within Gibbs algorithms in terms of Mean Square Error (MSE) for the parameters and  of cumulative RMSE for the latent variables. We carry out the comparison in statistical terms and estimate the MSE and RMSE by running the algorithms on 20 independent simulated datasets of 200 observations. In the comparison we take into account the computational cost and for the IMTM within Gibbs we use $N=20$ interacting chains, 1,000 iterations and $M=5$ proposal functions. This setting corresponds to 100,000 random draws and is equivalent to the 100,000 iterations of the MH within Gibbs algorithm used in the comparison. Note that the proposal step of the IMTM selects at random the proposal functions between the other chains and the selection step uses $\lambda_{j}^{(i)}(x,y)=(T_{j}^{(i)}(x,y)T_{j}^{(i)}(y,x))^{-1}$, with $h=1,\ldots,5$ and $i=1,\ldots,20$.

The results for the parameter estimation when applying IMTM are presented  in Table \ref{TabSV} and  show an effective improvement in the estimates, both for weekly and daily data, when compared to the results of a MH algorithm with an equivalent computational load.
\begin{table}[h]
\setlength{\tabcolsep}{4pt}
\centering
\medskip
\begin{tabular}{|c|c|cc|c|c|cc|}
\hline
&\multicolumn{3}{|c|}{Daily Data}                                         && \multicolumn{3}{|c|}{Weekly Data}\\
\hline
\hline
$\theta$        &  \multicolumn{1}{c|}{Value} &\multicolumn{2}{c|}{MSE}   & $\theta$       & \multicolumn{1}{c|}{Value} &\multicolumn{2}{c|}{MSE}    \\
                &    &\multicolumn{1}{|c}{IMTM} &\multicolumn{1}{c|}{MH}&                &       &\multicolumn{1}{|c}{IMTM} &\multicolumn{1}{c|}{MH}\\
\hline
\hline
$\alpha$        & 0  & \multicolumn{1}{c|}{ 0.04698}    &  0.09517        & $\alpha$        & 0    & \multicolumn{1}{c|}{ 0.00146   } &  0.00849  \\
                &    & \multicolumn{1}{c|}{(0.00612)}   & (0.00194)       &                 &      & \multicolumn{1}{c|}{(0.00139)  } & (0.00105) \\
\hline
$\phi$          &0.99& \multicolumn{1}{c|}{ 0.20109}   &   0.34825        & $\phi$          & 0.9  & \multicolumn{1}{c|}{ 0.01328   } &  0.10746  \\
                &    & \multicolumn{1}{c|}{(0.02414)}   & (0.05187)       &                 &      & \multicolumn{1}{c|}{(0.04014)  } & (0.03629) \\
\hline
$\sigma^{2}$    &0.01& \multicolumn{1}{c|}{ 0.00718 }   &  0.02380        & $\sigma^{2}$    & 0.1  & \multicolumn{1}{c|}{ 0.00136   } &  0.09175  \\
                &    & \multicolumn{1}{c|}{(0.00173)}   & (0.00202)       &                 &      & \multicolumn{1}{c|}{(0.00141)  } & (0.00358) \\
\hline
\end{tabular}
\caption{\label{TabSV} Mean square error (MSE) and its standard deviation (in parenthesis) for the parameter estimation with IMTM and MH within Gibbs algorithms. Left panel: daily datasets. Right panel: weekly dataset.}
\end{table}

A typical output of the IMTM for some chains of the population and for the latent variables $\mathbf{h}$ is given in Fig. \ref{FigResTypSV}. Each chart shows for a given chain the estimated latent variables (dotted black line), the posterior quantiles (gray lines) and the true value of $\mathbf{h}$ (solid black line).

Figures \ref{FigResSV1} and \ref{FigResSV2} exhibit   the HPD region at the 90\% (gray areas) and the mean (black lines) of the cumulative RMSE of each algorithm for the weekly and daily data, respectively. The statistics have been estimated from 20 independent experiments. The average RMSE shows that, in both parameter settings considered here, the IMTM (dashed black line) is more efficient than the standard MH algorithm (solid black line).

\begin{figure}[p]
\centering
\includegraphics[width=300pt,height=220pt]{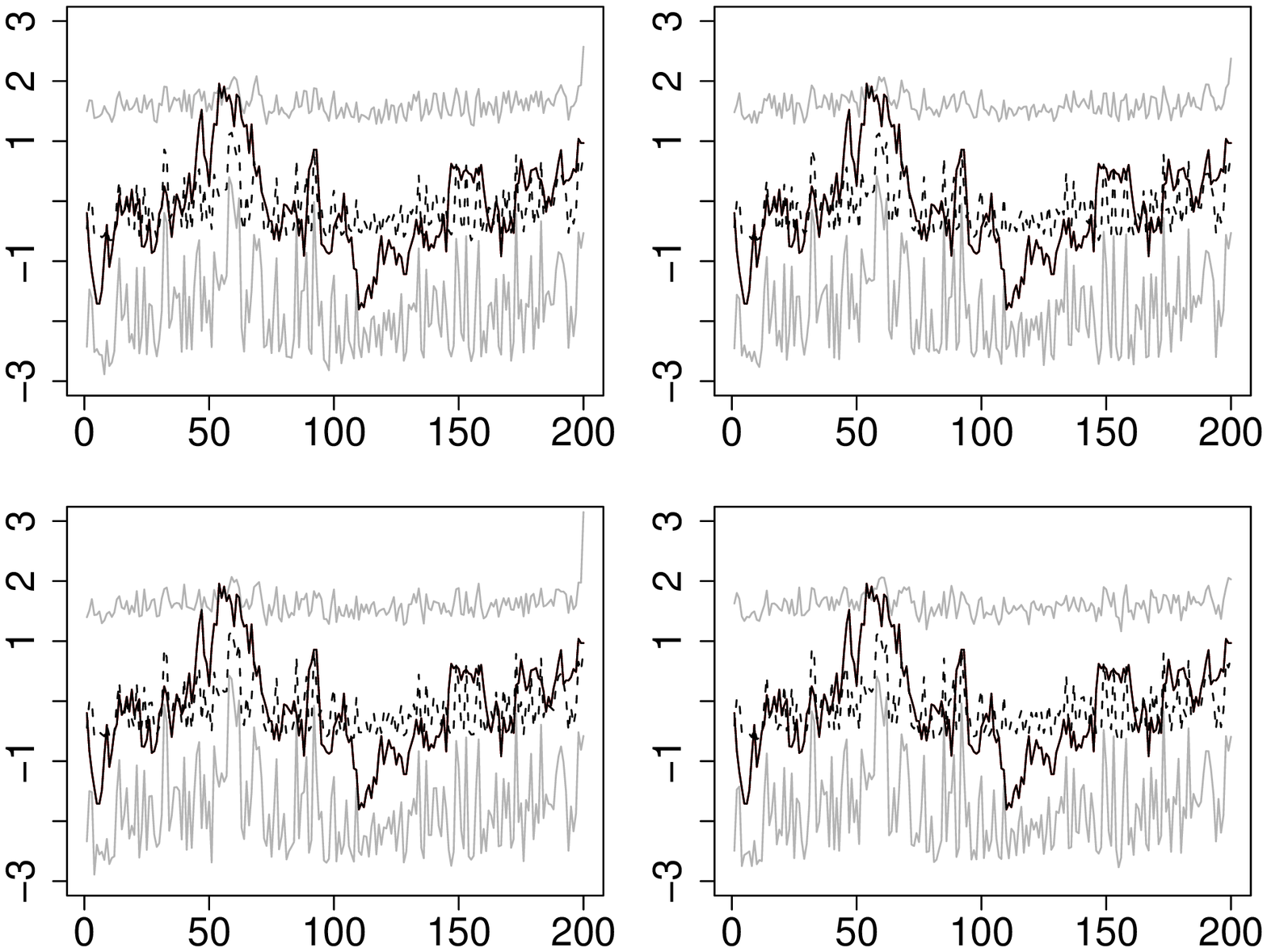}
\includegraphics[width=300pt,height=220pt]{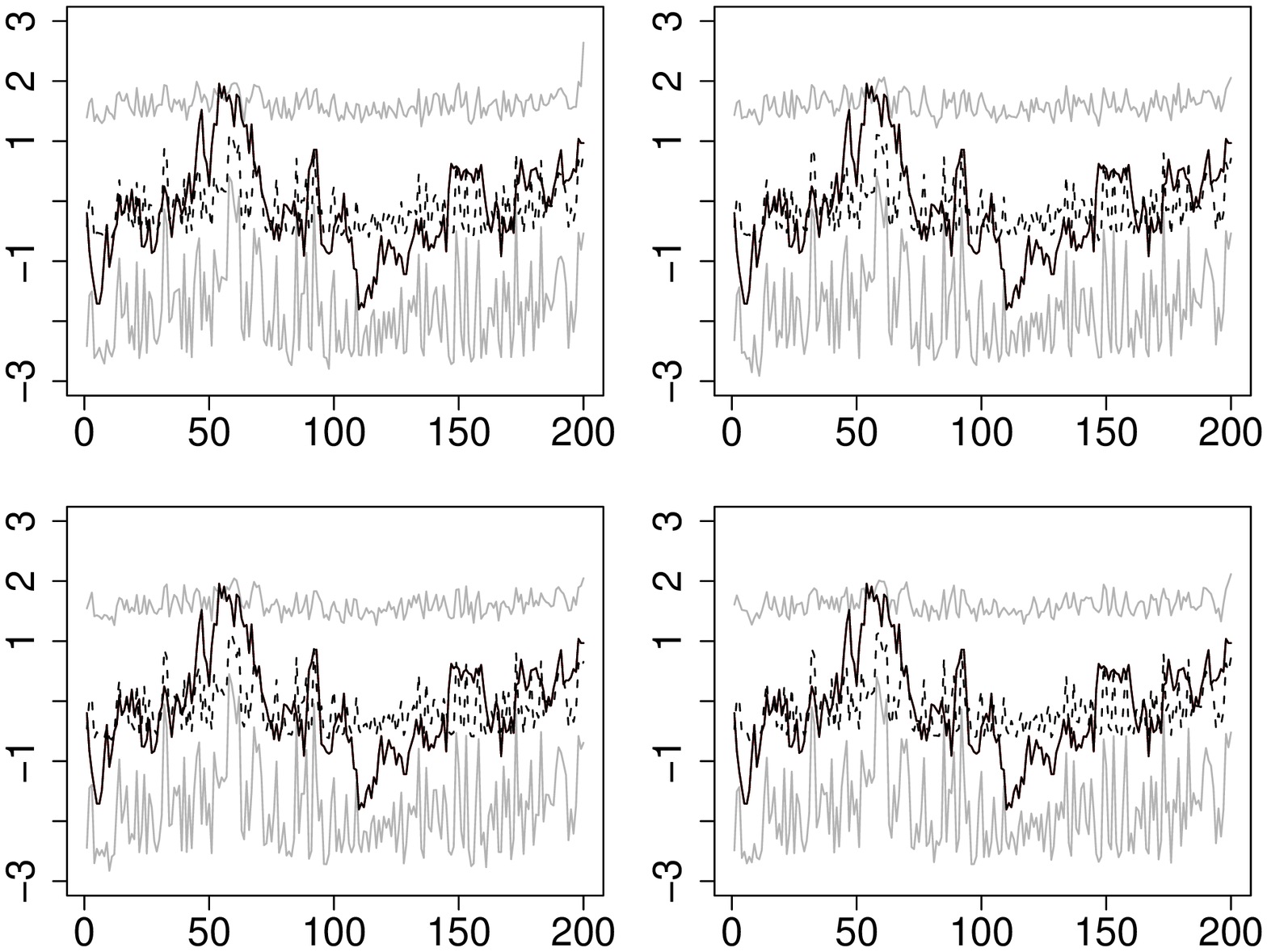}\label{FigResTypSV}
\caption{Typical output of the population of chains for a weekly dataset. For each chain the estimated latent variable (dotted black line), the 2.5\% and 97.5\% quantiles (gray lines) and the true value of $\mathbf{h}$ (solid black line).}
\end{figure}

\begin{figure}[p]
\centering
\includegraphics[width=300pt,height=220pt]{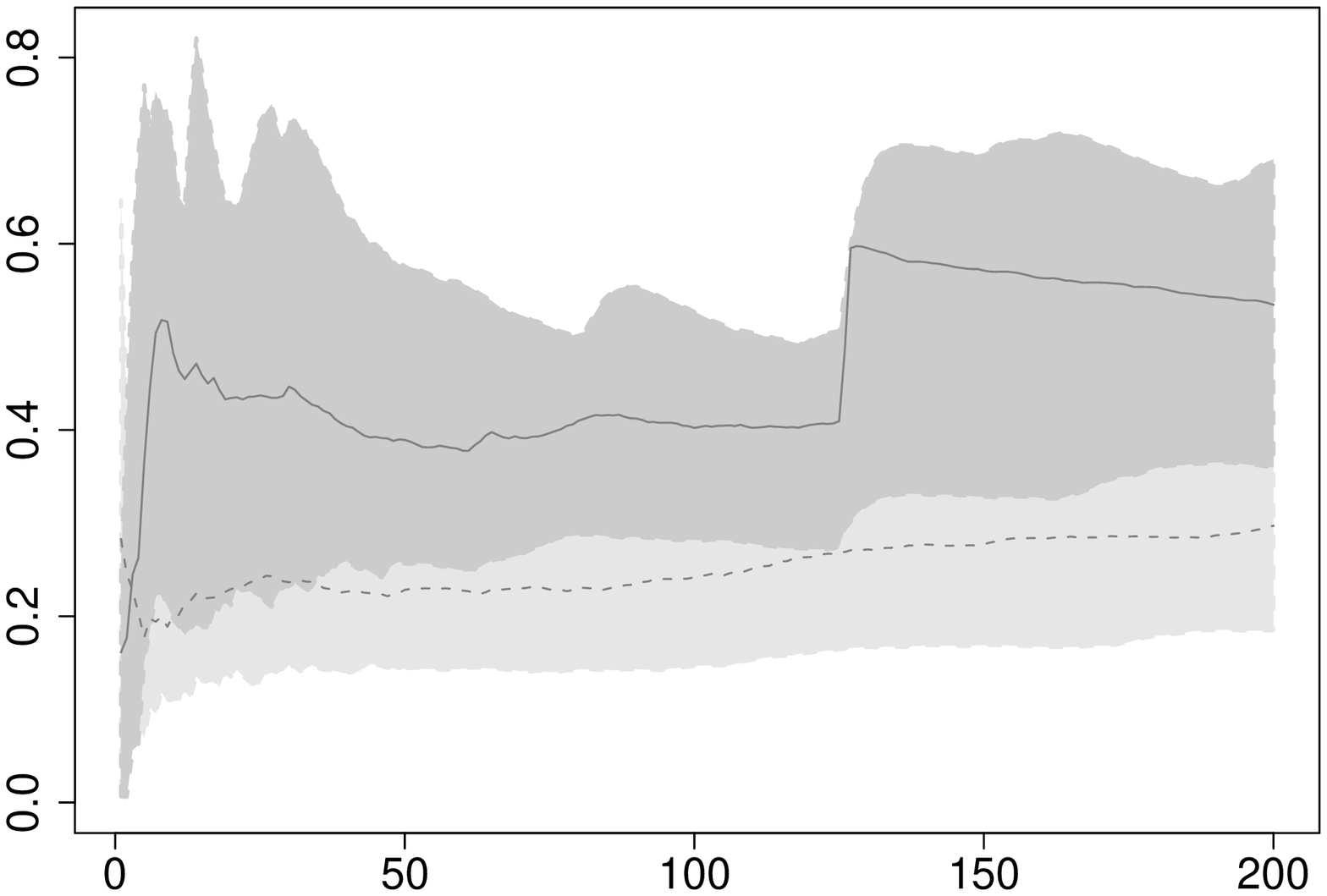}\label{FigResSV1}
\includegraphics[width=300pt,height=220pt]{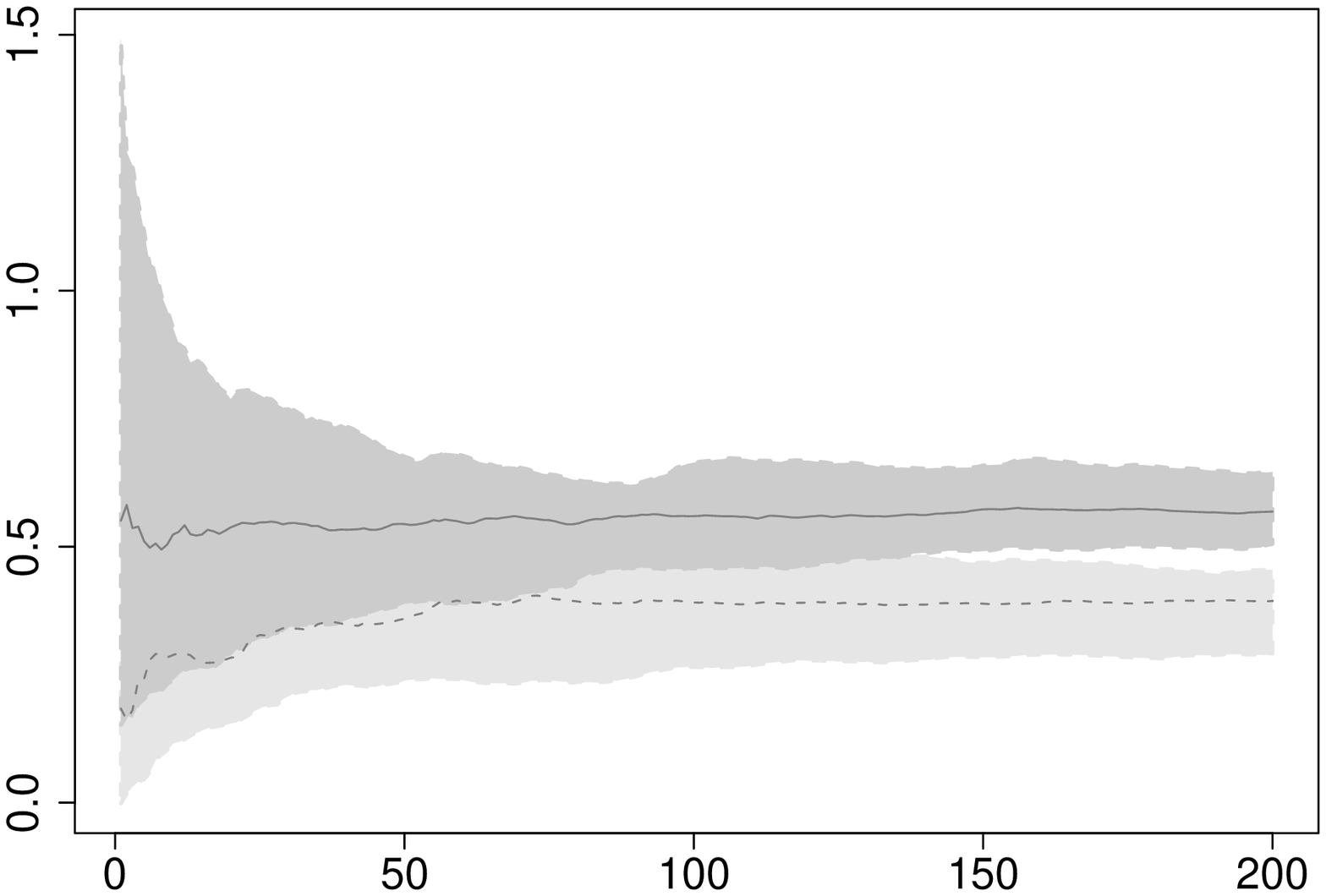}\label{FigResSV2}
\caption{Cumulative RMSE for the IMTM (dashed line) and MH (solid line) and the 90\% HPD regions for the IMTM (light gray) and MH (dark gray)
estimated on 20 independent experiments for both the daily (up) and weekly (bottom) datasets.}
\end{figure}

\section{Conclusions}
In this paper we propose a new class of interacting Metropolis algorithm,
with Multiple Try transition. These algorithms extend the existing literature
in two directions. First we show a natural and not straightforward way to include the
chains interaction in a multiple try transition. Secondly the multiple try transition has been
extended in order to account for the use of different proposal functions. We give a proof of validity
of the algorithm and show on real and simulated examples the effective improvement in the mixing property
and exploration ability of the resulting interacting chains.

\section*{Appendix A}
\begin{proof}\\
Without loss of generality, we drop the chain index $i$ and the iteration index $n$, set $M_{i}=N$, $\forall i$ and $x_n^{(i)}=x$ and denote with $T_{j}(y,x)=$ $T_{j}(y|x_{n+1}^{(1:i-1)},x_{n}^{(i)}, x_{n}^{(i+1:N)})$ $\lambda_{j}^{(i)}(y_j,x_n^{(i)})$ the $j$ proposal of the $i$-th chain at the iteration $n$ conditional on the past and current values, $x_{n+1}^{(1:i-1)}$ and $x_{n}^{(i+1:N)}$ respectively, of the other chains.

Let us define the following quantities $$\bar{w}(y_{1:N}|x)=\sum_{j=1}^{N}w_{j}(y_{j},x),\quad \bar{w}_{-k}(y_{1:N}|x)=\sum_{j\neq k}^{N}w_{j}(y_{j},x)$$
and $$S_{N}(J)=\frac{1}{\bar{w}(y_{1:N}|x)}\sum_{j=1}^{N}\delta_{j}(J)w_{j}(y_{j},x)$$
with $J\in\mathcal{J}=\{1,\ldots,N\}$ the empirical measure generated by different proposals and by the normalized selection weights.

Let $T(x,dy_{1:N})=\bigotimes_{j=1}^{N}T_{j}(x,dy_{j}):(\mathcal{Y}\times\mathcal{B}(\mathcal{Y}^{N}))\mapsto[0,1]$ the joint proposal for the multiple try and define $T_{-k}(x,dy_{1:N})=\bigotimes_{j\neq k}^{N}T_{j}(x,dy_{j})$. Let $A(x,y)$ be the actual transition probability for moving from x to y in the IMT2 algorithm. Suppose that $x\neq y$, then the transition is a results two steps. The first step is a selection step which can be written as $y=y_{J}$
and $x_{J}^{*}=x$ with the random index $J$ sampled from the empirical measure $S_{N}(J)$. The second step is a accept/reject step based on the generalized MH ratio which involves the generation of the auxiliary values $x_{j}^{*}$ for $j\neq J$. Then
\begin{eqnarray*}
&&\pi(x)A(x,y) =\\
&&= \pi(x)\int_{\mathcal{Y}^{N}}T(x,dy_{1:N})\int_{\mathcal{J}}S_{N}(dJ)\int_{\mathcal{Y}^{N-1}\times\mathcal{Y}^{2}}T_{-J}(y,dx^{*}_{1:N})\times\\
&&\quad\times\delta_{x}(dx^{*}_{J})\delta_{y_{J}}(dy) \min\left\{1,\frac{\bar{w}(y_{1:N}|x)}{\bar{w}(x_{1:N}^{*}|y)}\right\}\\
&&=\pi(x)\sum_{j=1}^{N}\int_{\mathcal{Y}^{N-1}}T_{-j}(x,dy_{1:N})T_{j}(x,y)\int_{\mathcal{Y}^{N-1}}T_{-j}(y,dx^{*}_{1:N})\times\\
&&\quad\times \frac{w_{j}(y,x)}{w_{j}(y,x)+\bar{w}_{-j}(y_{1:N}|x)}\min\left\{1,\frac{w_{j}(y,x)+\bar{w}_{-j}(y_{1:N}|x)}{w_{j}(x,y)+\bar{w}_{-j}(x_{1:N}^{*}|y)}\right\}\\
&&=\sum_{j=1}^{N}\frac{w_{j}(x,y)w_{j}(y,x)}{\lambda_{j}(y,x)}\int_{\mathcal{Y}^{2(N-1)}}T_{-j}(x,dy_{1:N})\times\\
&&\quad\times T_{-j}(y,dx^{*}_{1:N})\min\left\{\frac{1}{w_{j}(y,x)+\bar{w}_{-j}(y_{1:N}|x)},\frac{1}{w_{j}(x,y)+\bar{w}_{-j}(x_{1:N}^{*}|y)}\right\}
\end{eqnarray*}
which is symmetric in x and y.
\end{proof}

\end{document}